\newtheorem{theorem}{Theorem}
\newtheorem{corollary}{Corollary}
\newcommand{\eqdef}{:=}
\newcommand{\Fq}{\mathbb{F}_q}
\title{\LARGE \bf
Multi-Party Set Reconciliation Using Characteristic Polynomials
}
\author{Anudhyan Boral$^{1}$\thanks{$^{1}$Harvard University, School of Engineering and Applied Sciences.
Supported in part by NSF grant CCF-1320231.}
\and
Michael Mitzenmacher$^{2}$\thanks{$^{2}$Harvard University, School of Engineering and Applied Sciences.
Supported in part by NSF grants CCF-1320231, CNS-1228598, and IIS-0964473.
Part of this work was done while visiting Microsoft Research, New England.}
}
\begin{document}

\maketitle
\thispagestyle{empty}
\pagestyle{empty}


\begin{abstract}
In the standard set reconciliation problem, there are two parties
$A_1$ and $A_2$, each respectively holding a set of elements $S_1$ and
$S_2$. The goal is for both parties to obtain the union $S_1 \cup
S_2$.  In many distributed computing settings the sets may be large
but the set difference $|S_1-S_2|+|S_2-S_1|$ is small.  In these cases
one aims to achieve reconciliation efficiently in terms of
communication; ideally, the communication should depend on the size of
the set difference, and not on the size of the sets.

Recent work has considered  generalizations of the reconciliation problem
to multi-party settings, using a framework based on a specific type of
linear sketch called an Invertible Bloom Lookup Table.  Here, we
consider multi-party set reconciliation using the alternative framework of
characteristic polynomials, which have previously been used for
efficient pairwise set reconciliation protocols, and compare their
performance with Invertible Bloom Lookup Tables for these problems.
\end{abstract}

\section{Introduction}
In the standard theoretical framework for the {\em set reconciliation}
problem, two parties $A_1$ and $A_2$ each hold a set of keys
from a (large) universe $U$ with $|U| = m$, with the sets named
$S_1$ and $S_2$ respectively.  The goal is for both parties to
obtain $S_1 \cup S_2$.  Typically, set reconciliation is interesting
algorithmically when the sets are large but the set difference
$|S_1 - S_2| + |S_2 - S_1|$ is small; the goal is then to perform the
reconciliation efficiently with respect to the transmission size.
Ideally, the communication should depend on the size of the set
difference, and not on the size of the sets.

Recent work has examined the problem of extending set reconciliation
to multi-party settings \cite{IBLT,setreconc}.  This work examined the problem where three
or more parties $A_1,A_2,\ldots,A_N$ hold sets of keys
$S_1,S_2,\ldots,S_N$ respectively at various locations in a network,
and the goal is for all parties to obtain $\cup_i S_i$.  This could of
course be done by pairwise reconciliations, but more effective methods
are possible.  The multi-party set reconciliation problem is a natural
distributed computing problem.  For example, set reconciliation models
distributed loosely replicated databases, which for simplicity we
think of as simply holding a set of keys.  Such databases may be periodically
synchronized.  As we expect the number of differences among the databases
to be small compared to the database size, we would like reconciliation
schemes that scale with these differences.  We also would like to make use
of the network efficiently, ideally more efficiently than pairwise reconciliations.

Recent work has tackled this problem by extending Invertible Bloom
Lookup Tables (IBLTs), a hash-based data structure that, among other
uses, provide a natural solution to the two-party set reconciliation
problem \cite{IBLT}.  (See also \cite{IBF,setdifference}.)  The
extension shows that by performing operations on IBLTs in an
appropriate field, one can design protocols for multi-party set
reconciliation \cite{MitzPagh}.  Further, because IBLTs are linear
sketches, using IBLTs allows the use of network coding techniques to
improve the use of the network \cite{MitzPagh}.

In this work, we return to the classic solution for the 2-party
setting, which is based on characteristic polynomials and uses
techniques similar to those used for Reed-Solomon codes \cite{minsky2003set}.
The goal is to see whether and how much of the results for IBLTs in the
multi-party setting can be translated to similar results using these
alternative techniques.

We expect there to be a trade-off.  In the 2-party setting, where $d =
|S_1 - S_2| + |S_2 - S_1| = |S_1 \cup S_2|-|S_1 \cap S_2|$ is the set
difference, both techniques only require sending and receiving $O(d
\log m)$ bits of communication (as long as $d$, or an upper bound on
$d$ that is $O(d)$, is known), but using characteristic polynomials
generally requires a constant factor less communication than using
IBLTs, and is almost optimal in terms of communication.  In return,
using characteristic polynomials is much more computationally
intensive.  While IBLTs require only a linear number of operations to
recover all elements and $O(|S_i|)$ operations for each party to
compute the information to be transmitted (assuming suitable field
operations are $O(1)$ and hashing can be treated as a constant-time
operation), using characteristic polynomials requires almost $O(d^3)$ time to
recover all elements using standard Gaussian elimination and
$O(|S_i|d)$ operations to compute the information to be passed.  (As
discussed in \cite{setreconc}, and as we discuss further below,
theoretically faster algorithms are possible, but they are still
super-linear, and it appears that due to high constant factors they may
be unlikely to be useful in practice.)  Finally, IBLTs are randomized and
succeed with high probability, while using characteristic polynomials
is deterministic.

The main contribution of this work is to show that characteristic polynomials can, in a
suitable fashion, be extended to the multi-party case.
We follow the framework of the multi-party problem definition that is
considered in \cite{MitzPagh}, where the authors quantify the amount
by which the sets differ by the number of elements which belong to at
least one set but not all of them.  We call this quantity the
\textit{total set difference} $d$; here $d = |\cup_{i = 1}^N S_i
- \cap_{i = 1}^N S_i|$, which generalizes the set difference
for two parties.

We first show that there is a protocol in the {\em relay setting},
where each party is connected to an intermediate relay that can
compute and broadcast messages, in which each party and the relay send
a message of $O(d \log m)$ bits using an approach based on
characteristic polynomials. The communication is asymptotically
optimal information theoretically.

Building on this approach, we consider set reconciliation in an asynchronous network setting
using characteristic polynomials.  Each party is located at a distinct node
in a graph $G$ of size $N$, and in one round, only parties which
are on adjacent nodes can communicate with each other. Using recent
results from the network gossip literature \cite{giakkoupis},
we show that with each party sending (and receiving) at
most $O(d \log m)$ message bits in each round, it takes
$O(\phi^{-1}\log N)$ rounds for every node to obtain the union $S_\cup$
with high probability. Here $\phi$ is the \textit{conductance} of the graph $G$;
see, e.g., \cite{giakkoupis,shah} for more information on conductance.

Additionally, we show that with slight modifications, in both the
central relay and the network setting, our protocols can also support
recovery of owners of elements. That is, an agent $A_i$, after obtaining the union
of the sets $\cup_i S_i$, should also be able to recover an owner of the items
she does not own herself.  Specifically, at the end of the
protocol each party $A_i$ can not only obtain all the items in the set
$(\cup_j S_j) - S_i$, but $A_i$ can also obtain an original owner of each of
these items.

Although these results appear generally promising, we note they come
with significant limitations.  The intermediary nodes must do
significant work, essentially decoding sketches and recoding
information based on the decoding.  Less effort appears to be required by
intermediary nodes when using IBLTs, as the corresponding data
sketches are linear and can be combined using simple operations.
Hence, while our work shows that characteristic polynomials can be
used as a basis for multi-party reconciliation, we believe that further
simplification would be desirable.

\section{Background and Notation}

We work with the characteristic function of a set. For a set
$S \subset [m]$, and a prime $q$ greater than $m$, the characteristic function $f_S : \mathbb{F}_q
\mapsto \Fq$ is a polynomial defined as:
$$f_S(x) = \prod_{\alpha \in S} (x - \alpha)$$
where $\Fq$ is the prime finite field with $q$ elements.

A sketch $\sigma_d(p)$ of a polynomial $p : \Fq \mapsto \Fq$ is defined as a $(d + 1)$-tuple
of the evaluation of $p$ at $d + 1$ fixed points of $\Fq$. It is not important for us which $d + 1$
points we choose, but for concreteness let us fix those points to be $\{0,1,\ldots, d\}$. We note that
by Lagrange interpolation, it is possible to recover the coefficients of a degree $d$ polynomial $p$
from its sketch $\sigma_d(p)$. This is the key idea from Reed-Solomon codes that we exploit in our
protocol. By a sketch $\sigma_d(S)$ of a set $S \subset [m]$, we mean a sketch of its characteristic
function. Where there is no risk of confusion, we drop the subscript $d$ and refer to the
sketch of $S$ as $\sigma(S)$.

The standard approach for 2-party reconciliation using sketches of
this form is presented in \cite{minsky2003set}.  Treating keys as
numbers in a suitable field, $A_1$ considers the characteristic polynomial $f_{S_1}$ over a field
$\mathbb{F}_q$ for $q$ larger than $m$; and
similarly $A_2$ considers $f_{S_2}$.  Observe that in the rational
function $f_{S_1}/f_{S_2}$ the common terms cancel out, leaving a rational
function in $x$ where the sums of the degrees of the numerator and
denominator is the set difference, where the set difference is
defined as the quantity $|(S_1 - S_2) \cup (S_2 - S_1)|$. Assuming the set difference is at
most $d$, the rational function can be determined through
interpolation by evaluating the function at $d+1$ points, and then
factored.  Hence, if $A_1$ and $A_2$ send each other their respective
sketches $\sigma_d(S_1)$ and $\sigma_d(S_2)$, each party can compute
$f_{S_1}/f_{S_2}$ at $d+1$ points and thereby determine and reconcile the
values in $(S_1 - S_2) \cup (S_2 - S_1)$. The total number of bits sent in each
direction would be $(d+1) \lceil \log_2 q \rceil$.  Note that this
takes $O(d^3)$ operations using standard Gaussian elimination
techniques.  These ideas can be extended to use other codes, such as
BCH codes, with various computational trade-offs \cite{DORS}. Because
of the use of division to combine sketches, the sketches are not
``linear'' and do not naturally combine when used for three or more parties.
However, as we show, with a bit more work this limitation can be circumvented.

In the case of multiple parties, we define the total set difference of
the collection of sets $S_1, S_2, \ldots S_N$ to be the quantity
$|(\cup_{i = 1}^N S_i) - (\cap_{i = 1}^N S_i)|$. For convenience,
we assume in what follows that all parties know in
advance that the total set difference of the collection $\{S_1, \ldots
S_N\}$ does not exceed $d$.  Generally, in reconciliation settings,
there are multiple phases.  For example, in a first phase a bound on
$d$ is obtained, which is then used for reconciliation.
Alternatively, one takes an upper bound on $d$ that is suitable most
of the time, and then checks for successful reconciliation after the
algorithm completes using hashing methods.  See \cite{setdifference,minsky2003set}
for further discussion on this point.

\section{Multi-Party Reconciliation with a Central Relay}
We first describe a protocol where each of the $N$ parties $A_1 \ldots A_N$,
possessing sets $S_1 \ldots S_N$ respectively, communicate with a central relay in order to collectively
obtain the union of all the sets. We use the shorthands $S_\cup$ and $S_\cap$ to denote
$\cup_{i \in [N]} S_i$ and $\cap_{i \in [N]} S_i$. As mentioned we assume
that $|S_\cup - S_\cap| \leq d$, where $d$ is small compared to the number of elements
in the sets $S_i$.

The protocol is carried out as follows. Initially, each party $A_i$
computes their own sketch $\sigma_d(S_i)$ and sends the $O(d \log m)$ bits describing this
sketch to the relay. From these sketches, the relay computes the sketch of the union $S_\cup$. The relay
broadcasts the sketch of $S_\cup$ and from this sketch each party $A_i$ can retrieve the
elements of $S_\cup - S_i$.

\smallskip
\noindent {\bf Combining Sketches} We show how to combine the sketches of two sets to obtain the sketch of
their union.
In the following, we use the $\circ$ operator to denote coordinate-wise multiplication (in $\Fq$) of two sketches.
We also use $\circ^{-1}$ to denote coordinate-wise division of two sketches.
For two sets $S, T \subset [m]$ with a set difference of at most $d$, given
$\sigma_d(S)$, and $\sigma_d(T)$, we compute $\sigma_d(S \cup T)$ using the
following identity.
$$ \sigma_d(S \cup T) = \sigma_d(S) \circ \sigma_d(T - S)$$
The central relay can find $\sigma_d(T - S)$ from factoring the rational function $f_T/f_S$
and extracting the numerator, since, in its reduced form the rational function $f_T/f_S$ can be
written as
$$ \frac{f_T(x)}{f_S(x)} = \frac{ \prod_{\alpha \in T - S} (x - \alpha) }{
  \prod_{\alpha \in S - T} (x - \alpha)}.$$
Similarly the relay can find $\sigma_d(S - T)$.
Observe that the relay can recover the individual elements of $T - S$ and $S - T$
even though it does not have access to either of the sets $S$ and $T$ in its entirety.

By combining two sketches at a time, the central relay can obtain the sketch of $S_\cup =
\cup_{i \in [N]} S_i$ after $N - 1$ combinations. The relay then broadcasts the sketch
$\sigma_d(S_\cup)$ to each of the $N$ parties.

\smallskip
\noindent {\bf Note on Relay Output} The careful reader might notice that the relay can in fact just broadcast each
of the elements of $S_\cup - S_\cap$ to all the
parties. However, to maintain generality we work
with sketches throughout.  This approach allows us to generalize our
method to broader settings.

\smallskip
\noindent {\bf Distributed Computation at the Relay} If the relay has access to multiple
processors, she can perform the $N - 1$ combinations in a parallel manner. It is easy to see
that by combining two sketches at a time, using $N/2$ processors the relay needs to perform
$O(\log N)$ rounds of combinations.

\smallskip
\noindent {\bf Retrieving Missing Elements} Each $A_i$, having the sketch
$\sigma_d(S_\cup)$  and having computed already the sketch for her own set $S_i$, can retrieve the
individual elements she is missing from $S_\cup$.  The first step is to compute the sketch
of $S_\cup - S_i$ using component-wise division: $$ \sigma_d(S_\cup - S_i) = \sigma_d(S_\cup) \circ^{-1} \sigma_d(S_i).$$
From the sketch of $S_\cup - S_i$, agent $A_i$ can interpolate the characteristic function of
$S_\cup - S_i$, which is a polynomial of degree at most $d$. The actual elements of $S_\cup - S_i$
are then determined by factoring its characteristic function.

\smallskip
\noindent {\bf Tightness of Communication} The amount of communication in our protocol is
information-theoretically as succinct as possible in this setting, as the number of elements in the
universe grows, assuming that the parties have no prior
knowledge of the constituents of each others' sets other than the fact
that the total set difference is at most $d$. This is because, after fixing a particular
agent $A_i$'s set as $S_i$, with $|S_i| = n$, there are at least $\binom{m - n}{d}$
different subsets of elements that $A_i$ might be missing from the union. But if $m$ is much
larger compared to $n$ and $d$, then $\binom{m - n}{d} \approx m^d$. Thus, to specify the
missing elements we need at least $\log \binom{m - n}{d} \approx d\log m$ bits.

\smallskip
\noindent {\bf Time Complexity} We would like the computations performed by the
parties and the relay to be efficient. We aim for the time required to
encode to be linear in the number of set items, and the time to decode
to be polynomial in the total set difference $d$.  We assume that the basic
arithmetic operations in $\Fq$ take constant time. (This
may or may not be a reasonable assumption depending on the setting
and the size of $q$;  however, one can factor in a corresponding poly-logarithmic factor in $m$
as needed to handle the cost of operations in $\Fq$.)

Initially, each party $A_i$ must evaluate the polynomial corresponding to their set's characteristic
function at a pre-specified set of $d + 1$ points.  This can most straightforwardly be done by
the standard computation with $O(d|S_i|)$ operations, though for large total set differences it may be more efficient
(at least theoretically) to compute the coefficients of the
characteristic function and then evaluate the polynomial simultaneously
at the pre-specified set of $d + 1$ points.  (See, e.g., \cite{cantor} for possible algorithms.)
The computations performed by the relay include interpolation of a rational function where
the numerator and denominator have total degree $d$,
evaluation of a $d$-degree polynomial at $d$ points, and point-wise multiplication of
two sketches. Each of these computations can be done in $\tilde{O}(d)$ operations \cite{cantor}. (Here
$\tilde{O}$ hides polylog factors in $d$.)  The remaining
computation performed by the relay is factorization of a $d$-degree polynomial over $\Fq$;
the best theoretical algorithm of which we are aware is given by Kedlaya and Umans \cite{KU}, and requires
approximately $O(d^{1.5})$ operations;  other algorithms may be more suitable in practice.  (See also
\cite{KaltofenShoup}.)


We can state our results in the form of the following theorem:
\begin{theorem}
Given an upper bound $d$ on the size of the total set difference, $N$ parties using a relay
can reconcile their sets, from an universe of $m$ elements, using sketches of $d+1$ values in $\Fq$ (with $q > m$)
and with each party sending one sketch and the relay broadcasting a sketch.
Each sketch can be encoded in a message of $O(d \log m)$ bits.
The time for computation required by the $i$th party with set $S_i$ is
the time to evaluate their characteristic polynomial at $d+1$ pre-chosen points,
and the computation time required by the relay is dominated by
the time to factor a degree $d$ polynomial over $\Fq$ at most $O(N)$ times.
\end{theorem}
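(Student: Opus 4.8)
The plan is to assemble the theorem directly from the protocol description and the per-operation cost bounds already established in the preceding discussion, since the theorem is essentially a summary of the construction. I would organize the proof into three parts mirroring the three claims: correctness (the parties reconcile), communication cost, and time complexity.

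First I would verify correctness. The key algebraic fact is the combining identity $\sigma_d(S \cup T) = \sigma_d(S) \circ \sigma_d(T-S)$, which follows from the factorization $f_T/f_S = \prod_{\alpha \in T-S}(x-\alpha) / \prod_{\alpha \in S-T}(x-\alpha)$ in reduced form. I would argue inductively that after $N-1$ pairwise combinations the relay holds $\sigma_d(S_\cup)$; the induction hypothesis is that each partial union has set difference from any constituent at most $d$, which holds because every set difference encountered is a subset of $S_\cup - S_\cap$, whose size is bounded by $d$ by assumption. This guarantees that each rational function $f_T/f_S$ has numerator-plus-denominator degree at most $d$ and is therefore recoverable from the $d+1$-point sketches via interpolation. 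Then, after the relay broadcasts $\sigma_d(S_\cup)$, each party $A_i$ computes $\sigma_d(S_\cup - S_i) = \sigma_d(S_\cup) \circ^{-1} \sigma_d(S_i)$, interpolates the degree-at-most-$d$ characteristic polynomial, and factors it to recover the missing elements $S_\cup - S_i$, completing reconciliation.

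Next I would handle the communication bound, which is immediate: each sketch is a $(d+1)$-tuple of elements of $\Fq$, and since $q > m$ each field element needs $\lceil \log_2 q \rceil = O(\log m)$ bits, giving $O(d \log m)$ bits per sketch. Each party sends exactly one sketch and the relay broadcasts exactly one sketch.

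Finally, for the time complexity I would simply collect the per-step costs already tabulated. Each party's encoding cost is the cost of evaluating its characteristic polynomial at the $d+1$ fixed points, exactly as stated. The relay's work decomposes into $N-1$ combine operations, each requiring one rational-function interpolation, one polynomial evaluation, one point-wise multiplication, and one factorization of a degree-$d$ polynomial; the first three cost $\tilde O(d)$ each while the factorization dominates. Summing over the $O(N)$ combinations, the relay's running time is dominated by at most $O(N)$ polynomial factorizations, as claimed. The only point requiring genuine care — and the step I expect to be the main obstacle — is the correctness induction: one must confirm that the total-set-difference bound $|S_\cup - S_\cap| \le d$ indeed controls every intermediate set difference $|T-S|$ and $|S-T|$ arising during the pairwise combinations, so that no intermediate rational function exceeds degree $d$ and every interpolation stays within the sketch's resolution. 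I would make this precise by noting that for any two partial unions $S, T \subseteq S_\cup$, both $S - T$ and $T - S$ are contained in $S_\cup - S_\cap$, since any element in exactly one of $S, T$ is neither in all sets nor absent from all.
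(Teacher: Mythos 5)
Your proposal is correct and follows essentially the same route as the paper, which proves this theorem implicitly by the protocol description preceding it: the combining identity $\sigma_d(S \cup T) = \sigma_d(S) \circ \sigma_d(T-S)$, the per-party retrieval step via $\sigma_d(S_\cup) \circ^{-1} \sigma_d(S_i)$, the $(d+1)\lceil \log_2 q\rceil$-bit sketch size, and the accounting of the relay's $N-1$ combinations dominated by polynomial factorization. The one place you go beyond the paper is in making explicit the induction that every intermediate symmetric difference of partial unions lies inside $S_\cup - S_\cap$ (since any element of $S_\cap$ is in every partial union), a point the paper takes for granted; this is a worthwhile clarification but not a different proof.
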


\smallskip
\noindent {\bf Recovering an Owner of a Missing Element} We describe modifications to our protocol
that would further enable each party to also retrieve an owner of each element she is missing from the union.
In some settings, this additional information be useful;  for example, there may be additional information
associated with a set element that may require contact between the parties to obtain or resolve.

The relay can maintain running sketches of the current union and intersection of the sets following
the framework that we have described. When a new sketch for set $S_i$ arrives, new running
values $S_\cup^*$ and $S_\cap^*$ can be computed from previous values $S_\cup'$ and $S_\cap'$ as follows:
$$ \sigma_d(S_\cup^*) = \sigma_d(S_\cup') \circ \sigma_d(S_i - S_\cup');$$
$$ \sigma_d(S_\cap^*) = \sigma_d(S_\cap') \circ^{-1} \sigma_d(S_\cap' - S_i).$$
Note that, as part of this process, by using $\sigma_d(S_i - S_\cup')$ to determine the
characteristic polynomial of $S_i - S_\cup'$, the relay can determine which new elements are
being brought into the union by each set as each sketch arrives.  (For the combination of the
first two sketches $S_i$ and $S_j$, both $\sigma_d(S_i - S_j)$ and $\sigma_d(S_j - S_i)$ will
need to be computed, as described above.)  As before, while the relay could broadcast this
information, we prefer to keep everything in the setting of sketches.

To produce a final sketch, the relay can then re-encode each element $\alpha$ in the final
$S_\cup'- S_\cap'$ by encoding the element as $(\alpha + m i_\alpha)$, where $i_\alpha$ is the
$0$-based index of the smallest indexed owner of the element $\alpha$. As we are working over
prime finite fields, the item value and owner ID are the remainder and the quotient
respectively from division of the encoded field element by $m$.
We now work over a larger field $\mathbb{F}_{q'}$ with $q' > mN$ and we denote the sketch of
any set $T$ as $\sigma_d^\omega(T)$ after re-encoding each element with an owner index.

The sketch $\sigma_d^\omega(S_\cup)$ is created by taking the point-wise multiplications of
the corresponding sketches of $S_\cap'$ and $S_\cup' - S_\cap'$.
$$ \sigma_d^\omega(S_\cup) = \sigma_d^\omega(S_\cap') \circ \sigma_d^\omega(S_\cup' - S_\cap'),$$
The final $S_\cup$ just equals the final computed $S_\cup'$;
note this corresponds to owner labels being set to $0$.

If the relay now sends $\sigma_d^\omega(S_\cup)$ to a party, with elements
marked by an owner, the party cannot use the previous cancellation
procedure as she does not know which owner her own elements were
assigned to. Therefore, the relay also sends a sketch of the
intersection, $\sigma_d^\omega(S_\cap)$. Each party thus receives the sketches
of both $S_\cup$ and $S_\cap$ and can hence obtain the sketch of $S_\cup - S_\cap$.
We can decipher all the elements $S_\cup - S_\cap$ along with one of their owners.

Alternatively, with the re-encoded values from the relay, the set
difference between the final $S_\cup$ and $S_i$ consists of at most
$2d$ elements, as $d$ elements in $S_i$ might have been re-encoded to
different values (that is, the same element but encoded to a different
owner).  Hence the relay could send a sketch for up to $2d$
differences, or $\sigma_{2d}(S_\cap)$.  If each party computes
$\sigma_{2d}(S_i)$, then after the relay broadcasts each party can
compute
$$\frac{f_{S_\cap}(x)}{f_{S_i}(x)}$$ by interpolation and thereby recover any missing
elements or elements that have been re-encoded.

\smallskip
\noindent {\bf Asynchronous Message Arrivals} We note that our protocols are robust enough to
be able to handle situations where the initial messages from the agents to the relay
arrive asynchronously. In this case, the relay can keep performing computations without
having to wait for all the $N$ messages to arrive. The encoding that maps from agent indices
to owner IDs remains fixed, but the relay has leeway in determining which owner ID is attached to a particular
item, when the item has multiple owners. In fact, the relay can define an arbitrary ordering
on the set of agents $\{1,2 \ldots N\}$ and use this ordering to choose the 'smallest' owners of items.
In particular, the relay can choose the order in which the messages from the agents arrive. This will enable
the relay to simplify computation by enabling her to perform the sketch combinations without having to wait
for subsequent messages to arrive.

\smallskip

We state our modifications to the protocol to enable recovery of the owners in the corollary below.

\begin{corollary}
Given an upper bound $d$ on the size of the total set difference, each of $N$ parties using a relay
can reconcile their sets and also obtain one owner of each missing element, using sketches of $d+1$
values in $\Fq$ (with $q > mN$), with each party sending one sketch
and the relay broadcasting a sketch - each a message of size $O(d \log (mN))$.
The time for computation required by the $i$th party with set $S_i$ is
the time to evaluate their characteristic polynomial at $d+1$ pre-chosen points,
and the computation time required by the relay is dominated by
the time to factor a degree $d$ polynomial over $\Fq$ at most $O(N)$ times.
\end{corollary}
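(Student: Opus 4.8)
The plan is to bootstrap off the already-proved Theorem: the basic reconciliation protocol and its resource bounds carry over, so the only new content is verifying that the owner-recovery machinery described above is correct and does not inflate the asymptotic costs. First I would fix the encoding and check it is information-preserving. We work over $\Fq$ with $q > mN$ and encode an element $\alpha \in [m]$ whose smallest-index owner is party $i_\alpha \in \{0, 1, \ldots, N-1\}$ as the field element $\alpha + m\,i_\alpha$. Since $0 \le \alpha < m$ and $0 \le i_\alpha < N$ we have $0 \le \alpha + m\,i_\alpha < mN < q$, so the encoded value does not wrap around and $\alpha$ is recovered as the residue modulo $m$ while $i_\alpha$ is the quotient. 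In particular the map is injective, and distinct raw elements yield distinct field values irrespective of their owner labels, so the re-encoded characteristic polynomials over $\Fq$ retain the clean product-of-linear-factors structure on which the combining identities depend.

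Next I would verify the relay's running computation. As each sketch $\sigma_d(S_i)$ arrives, the relay uses the combining procedure already established — form $f_{S_i}/f_{S_\cup'}$, interpolate, factor, and extract the numerator — to obtain $\sigma_d(S_i - S_\cup')$ and hence the explicit new elements entering the union, updating the running intersection analogously. Because it sees these elements one batch at a time, the relay can assign to each its smallest-index owner under the (possibly arbitrary) agent ordering it has fixed, which also accommodates asynchronous arrivals. The final owner-labeled sketch is then assembled as $\sigma_d^\omega(S_\cup) = \sigma_d^\omega(S_\cap') \circ \sigma_d^\omega(S_\cup' - S_\cap')$, where the intersection elements carry owner label $0$ and so encode to $\alpha$ itself, appearing identically in $\sigma_d^\omega(S_\cap)$.

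The key correctness step is recovery at each party. Since a party $A_i$'s own elements may have been re-encoded with an owner label she cannot predict, direct cancellation against her raw sketch fails; instead the relay broadcasts both $\sigma_d^\omega(S_\cup)$ and $\sigma_d^\omega(S_\cap)$. I would argue that, because $S_\cap$ and $S_\cup - S_\cap$ are disjoint as element sets and the encoding is injective on element-owner pairs, their encoded value-sets are disjoint; hence the encoded multiset of $S_\cup$ factors exactly as the union of these two pieces and
$$ \sigma_d^\omega(S_\cup) \circ^{-1} \sigma_d^\omega(S_\cap) = \sigma_d^\omega(S_\cup - S_\cap). $$
As $|S_\cup - S_\cap| \le d$ by hypothesis, this is the sketch of a polynomial of degree at most $d$, which $A_i$ interpolates and factors to read off every element of $S_\cup - S_\cap$ together with an owner; discarding the elements she already holds leaves precisely her missing elements with owners. (The alternative $\sigma_{2d}$ route sketched above yields a second proof at the cost of a larger but still $O(d)$-degree interpolation.)

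Finally I would tally resources. Each transmitted sketch is $d+1$ elements of $\Fq$ with $q > mN$, i.e.\ $(d+1)\lceil \log_2 q \rceil = O(d \log(mN))$ bits; each party's work remains a single evaluation of its characteristic polynomial at the $d+1$ fixed points, exactly as in the Theorem; and the relay performs $O(N)$ combinations, each dominated by one degree-$d$ factorization over $\Fq$, with the remaining interpolations and point-wise products costing only $\tilde{O}(d)$ apiece. The main obstacle is the correctness of the owner-labeled combination — specifically ruling out spurious cancellations in the enlarged field and confirming that $S_\cup - S_\cap$ still has degree at most $d$ after re-encoding. Once injectivity of the encoding pins down that the label-$0$ intersection factors and the labeled difference factors never coincide, this reduces to the factorization identities already justified for the unlabeled case.
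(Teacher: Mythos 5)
Your proposal is correct and follows essentially the same route as the paper: the relay maintains running union and intersection sketches, re-encodes each element of $S_\cup - S_\cap$ as $\alpha + m i_\alpha$ over a field with $q > mN$, and broadcasts both $\sigma_d^\omega(S_\cup)$ and $\sigma_d^\omega(S_\cap)$ so that each party can divide point-wise, interpolate, and factor a degree-at-most-$d$ polynomial to recover its missing elements together with an owner. Your added injectivity check of the encoding and your resource tally match the paper's reasoning, and the alternative $\sigma_{2d}$ broadcast you mention is likewise the paper's secondary variant.
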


\section{Reconciliation in the Network Setting}

In this section, we describe a protocol for multi-party set
reconciliation over a network. Using previous results on gossip spreading
techniques (also referred to generally as rumor spreading), we can
show that our protocol terminates in
$O(\phi^{-1} \log N)$ rounds of communication, where $\phi$ is
the conductance of the network.  Here again we are following the framework
of \cite{MitzPagh}, but replacing their use of IBLTs with sketches based on
characteristic polynomials.

In this setting, we assume that each of the $N$ nodes start with the
knowledge of only their own set, and aim to follow a gossip protocol
so that each of them obtains the union of all the $N$ sets within a
small number of rounds.  The $N$ parties are situated at $N$ different
nodes of a graph $G$ and only adjacent nodes can communicate with each
other directly.  To be clear the graph $G$ may have more than $N$ nodes,
as there will generally be nodes that pass messages that are not
parties with information.

In the case where one party has a piece of information to distribute
to all other parties, it is known that
the standard {\sc PUSH}-{\sc PULL} protocol for ``rumor spreading''
will distribute that information to all the nodes of the graph
within $O(\phi^{-1} \log N)$ rounds high probability \cite{giakkoupis}.
(The {\sc PUSH}-{\sc PULL} protocol works as follows:  in each round,
every informed node that knows the rumor to be spread chooses a random
neighbor and sends it to the neighbor;  every uninformed node that does
not know the rumor contacts a neighbor in an attempt to get the rumor.)
For more on rumor spreading,
see also for example \cite{chierichettialmost,DebMedard,haeupler,shah}).

Here, we explain how the approach used by the relay described
previously allows us to use the standard {\sc PUSH}-{\sc PULL}
protocol for reconciliation.  (The general approach will also apply to allow
us to use other rumor spreading protocols for reconciliation.)

We show that the previous protocol described for relays without the
owner information carries over to the network setting using the {\sc
PUSH}-{\sc PULL} protocol. In a particular round, a node $v$ would
possess the sketch of the union of the sets belonging to a
sub-collection of the agents, say $C \subset [N]$. Let us denote this
set as $S_C$ ($\eqdef \cup_{i \in C} S_i$) where $C$ can be an arbitrary sub-collection of $[N]$.
We divide each round into two sub-rounds.  In the first sub-round, each vertex
pushes whatever information it has to a random neighbor.  In the
second sub-round, each vertex pulls whatever information it can obtain
from a random neighbor.  (Here, the sets are the rumors, and each passed sketch encodes the information about
the union of the sets obtained from all previous rounds. At the end,
we use a union bound over all possible rumors.)

If the node $v$ receives a sketch of $S_D$ from an adjacent neighbor,
for some $D \subset [N]$, then she can update her known sub-collection
to be $C \cup D$ and obtain the sketch of $S_{C \cup D}$. This
procedure of combining the two sketches is the same as the one
performed by the relay in the central relay setting. Note that, thus
far, the protocol we are considering does not carry information which
would enable $v$ to know \textit{which} sub-collection $C$ corresponds
to the set that she is holding.

Using the known bound on the {\sc PUSH}-{\sc PULL} protocol have the following theorem.

\begin{theorem}
Given an upper bound $d$ on the size of the total set difference, $N$ parties each possessing
sets from a universe of size $m$ communicating over a graph $G$ can reconcile their sets
using sketches of $d+1$ values in $\Fq$ (such that $q > m$), with each party sending
one sketch per sub-round (an $O(d\log m)$ bit message) using the {\sc PUSH}-{\sc PULL} randomized gossip
protocol, in $O(\phi^{-1} \log N)$ rounds with high probability.
\end{theorem}

\begin{proof}
We choose a suitable number of rounds $L = O(\phi^{-1} \log n)$ based on the
desired high probability bound that allows
$N$ parallel versions of the single-message gossip protocol to
successfully complete with high probability, as guaranteed by Theorem 1.1
of \cite{giakkoupis}.

Note that, for any single set, the sketch corresponding to that set
behaves just as though it was acting as part of the
single-message protocol;  the fact that other sketches may have been combined
into a shared sketch does not make any difference from the point of view
of the single set under consideration.
Hence, we can treat this as multiple single-message
problems running in parallel, and apply a union bound on the failure
probability.  (See \cite{MitzPagh} for a more extensive discussion.)

Hence, after $L$ rounds, with high probability all $N$ parties obtain sketches for all of the $N$ sets,
and hence all parties have the necessary information for reconciliation.
\end{proof}

\smallskip
\noindent {\bf Gossip Protocol with Owner Information} We now turn to extending the above
protocol to allow owner information to be be determined as well.
The primary difficulty in adapting our protocol with owner information
from the relay setting is that we might have the same item identified with different owners, and they
might be mistaken for different items. However, if we carry along the sketch of the intersection
as well, during the combination of two sets each party can extract and re-encode the items belonging to
at least someone but not everyone in the sub-collection.

Instead of $0$-indexed identifiers of the nodes, we use labels $\{1,\ldots N\}$ for the agents
for the purpose of owner identification. We work over of a prime field $\Fq$ with $q$ a prime which is
at least $m(N + 1)$. For a given
sub-collection $C \subset [N]$, we maintain the sketch of the union and the intersection of the original
sets corresponding to the sub-collection $C$. We denote these sets by the shorthands $S_{\cup C}$ and
$S_{\cap C}$ respectively.  We call the minimum indexed member of $C$
the \textit{leader} of $C$, and denote her index by $l_C$.

When we consider the set $S_{\cup C}$, for each element of $S_{\cup C}-S_{\cap C}$, we also attach the index of the minimum-indexed owner
from among agents in $C$, following the approach used to associate an agent with an item in the relay case.
For elements in $S_{\cap C}$, we store the elements with a dummy owner value
of $0$. (Otherwise, the elements of $S_{\cap C}$ would have had $l_C$ as the minimum-indexed owner.)
To distinguish this encoding from the previous protocols, we denote the
this sketch of a set $S$ as $\hat{\sigma}(S)$.
In our protocol, at each node $v$ we maintain the
running tuple $\tau(C) \eqdef (\hat{\sigma}(S_{\cup C}), \hat{\sigma}(S_{\cap C}), l_C)$, where
$C$ is the collection of agents whom the node $v$ has made contact with, either directly or indirectly.
(That is, it is the collection of agents whose original sketches have reached $v$, albeit perhaps combined with other sketches along the way.)
Again, note that in both the sketches $\hat{\sigma}(S_{\cup C})$ and $\hat{\sigma}(S_{\cap C})$, the elements which also appear in
$S_{\cap C}$ are encoded with the owner-id of $0$ instead of $l_C$.

\smallskip
\noindent {\bf Combining Sub-Collection Sketches}
When a node $v$ receives information
about another sub-collection $D \subset [N]$ from a random neighbor, it combines
this with its own tuple for  $C \subset [N]$
to obtain the tuple $\tau(E) = (\hat{\sigma}(S_{\cup E}), \hat{\sigma}(S_{\cap E}), l_E)$,  where
$E \eqdef C \cup D$. Note that it is not necessary that $C$ and $D$ be disjoint sub-collections.

We now describe how to combine $\tau(C)$ and $\tau(D)$
to compute $\tau(C \cup D) = \tau(E)$.  We are given the sketches of $S_{\cup C}, S_{\cap C}, S_{\cup D},$ and
$S_{\cap D}$, as well as the minimum indices $l_C$ and $l_D$.
We assume without loss of generality that $l_C \leq l_D$, so $l_E = l_C$.  To compute
the sketches $\hat{\sigma}(S_{\cup E})$ and $\hat{\sigma}(S_{\cap E})$, the key idea is to extract the
sketch of $S_{\cap E}$ and then explicitly recover and re-encode all the elements in
both $S_C$ and $S_D$ which do not belong to agent $A_{l_E}$. We describe the procedure
in three parts: (a) Computing the sketch of $S_{\cap E}$. (b) Extracting the elements in $S_{\cup E} - S_{\cap E}$
and re-encoding their owner information. (c) Computing the sketch of $S_{\cup E}$.

(a) First, we show how to compute the sketch of $S_{\cap E}$. Note that the items of $S_{\cap C}$ have
$l_C$ as their minimum-indexed owner, hence they will be encoded with an owner-ID of $0$ in both the
sketches $\hat{\sigma}(S_{\cap C})$ and $\hat{\sigma}(S_{\cup C})$. An analogous statement holds for the
sub-collection $D$. As both $\hat{\sigma}(S_{\cap C})$ and $\hat{\sigma}(S_{\cap D})$ have all their
owner-IDs equal to $0$, we can treat these as sets of items without owner information. Using the identity
$$ \frac{f_T(x)}{f_S(x)} = \frac{ \prod_{\alpha \in T - S} (x - \alpha) }{
  \prod_{\alpha \in S - T} (x - \alpha)}.$$
we can compute $\hat{\sigma}(S_{\cap C} - S_{\cap D})$. Now, we can perform point-wise division
of the sketches to compute $\hat{\sigma}(S_{\cap C} \cap S_{\cap D})$ using,
$$ \hat{\sigma}(S_{\cap C} \cap S_{\cap D}) = \hat{\sigma}(S_{\cap C}) \circ^{-1} \hat{\sigma}(S_{\cap C} - S_{\cap D}) $$
This gives us the sketch of $S_{\cap E} = S_{\cap C} \cap S_{\cap D}$.

(b) Next, we extract the individual items of $S_{\cup C} - S_{\cap E}$ along with their owners. Observe that all
items in $S_{\cap E}$ also belong to the leader of $C$, namely $A_{l_C}$. In both $\hat{\sigma}(S_{\cup C})$ and
$\hat{\sigma}(S_{\cap E})$ these items have the same owner ID of $0$. Hence, we can take advantage of the identity
$$ \frac{f_{S_{\cup C}}(x)}{f_{S_{\cap E}}(x)} = \prod_{\alpha \in S_{\cup C} - S_{\cap E}} (x - \alpha - mi_\alpha) $$
where $i_\alpha$ is the smallest-indexed owner of item $\alpha$ among agents in the sub-collection $C$. Thus, we
have explicitly obtained the items in $S_{\cup C} - S_{\cap E}$ along with their minimum indexed owner.
Note that these are the items
which belong to some agent in the collection $C$, but not to all agents in the collection $E = C \cup D$.

Using a similar procedure on sub-collection $D$ instead of $C$, we obtain analogously the items
in $S_{\cup D} - S_{\cap E}$; encoded with their minimum-indexed owners among agents in $D$. At this point, if $l_D \neq l_C$,
we modify the items in $S_{\cup D} - S_{\cap E}$ with owner ID $0$ to instead have owner ID $l_D$. For items which appear both in
$S_{\cup C} - S_{\cap E}$ and $S_{\cup D} - S_{\cap E}$, we can now compute their minimum-indexed owner ID from among
agents in $E$  by choosing the minimum of the two owners that was obtained from the two sketches of $S_{\cup C} -
S_{\cap E}$ and $S_{\cup D} - S_{\cap E}$.

(c) In our final sketch $\hat{\sigma}(S_{\cup E})$, the items which also appear in $\hat{\sigma}(S_{\cap E})$ would have
agent $A_{l_E}$ as their minimum-indexed owner; hence by definition of $\hat{\sigma}$ these items appear in both the sketches
with an owner ID of $0$. This suggests that we can build the sketch $\hat{\sigma}(S_{\cup E})$ by using the already-computed
sketch $\hat{\sigma}(S_{\cap E})$ as a starting point. In fact, $S_{\cup E} = S_{\cap E} \sqcup
((S_{\cup C} - S_{\cap E}) \cup (S_{\cup D} - S_{\cap E}))$, where $\sqcup$ denotes disjoint union. Recall that we have the items
$\alpha$ of $(S_{\cup C} - S_{\cap E}) \cup (S_{\cup D} - S_{\cap E})$ explicitly, along with their minimum-indexed owner
ID (say $i'_\alpha$) from among agents in $E$. Thus we have
$$f_{S_{\cup E}}(x) = f_{S_{\cap E}}(x) . \prod_{\alpha \in (S_{\cup C} \cup S_{\cup D}) - S_{\cap E}}
(x - \alpha - m i'_\alpha).$$

We can therefore compute $\hat{\sigma}(S_{\cup E})$ by constructing the sketch of
the characteristic function of $(S_{\cup C} \cup S_{\cup D}) - S_{\cap E}$ and
point-wise multiplying it with the sketch of $S_{\cap E}$.

We summarize our result in the following corollary.
\begin{corollary}
Given an upper bound $d$ on the size of the total set difference, $N$ parties each possessing
sets from a universe of size $m$ communicating over a graph $G$ can reconcile their sets while recovering
an owner for each missing element, using sketches of $d+1$ values in $\Fq$ (such that $q > m(N + 1)$), with each party sending
at most two sketches per sub-round ($O(d\log(mN))$ bit messages) using the {\sc PUSH}-{\sc PULL} randomized gossip
protocol, in $O(\phi^{-1} \log N)$ rounds with high probability.
\end{corollary}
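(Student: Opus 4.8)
The plan is to combine two ingredients that are already in place: the gossip analysis carried out for Theorem~2, which controls how fast the collection of reached agents grows at each node, and the correctness of the owner-augmented combination described in parts~(a)--(c) above, which controls what is actually computed when two tuples meet. I would organize everything around the running tuple $\tau(C) = (\hat{\sigma}(S_{\cup C}), \hat{\sigma}(S_{\cap C}), l_C)$ held at a node, together with the invariant that (i) $l_C$ is the least index in $C$; (ii) $\hat{\sigma}(S_{\cap C})$ is the sketch of $S_{\cap C}$ with every element carrying owner-id $0$; and (iii) in $\hat{\sigma}(S_{\cup C})$ each element of $S_{\cap C}$ carries owner-id $0$ while each element of $S_{\cup C}-S_{\cap C}$ carries the minimum index in $C$ owning it. The corollary then follows once I establish (A) that this invariant is preserved under the combination $\tau(C),\tau(D)\mapsto\tau(C\cup D)$; (B) that all sketches arising are sketches of polynomials of degree at most $d$; and (C) that within the claimed number of rounds every node holds $\tau([N])$, from which reconciliation and owner recovery are immediate.

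Point (B) is the routine part: for any sub-collection $C$ we have $S_{\cup C}-S_{\cap C}\subseteq S_\cup - S_\cap$, and every difference appearing in the combination (such as $S_{\cap C}-S_{\cap D}$, $S_{\cup C}-S_{\cap E}$, and their analogues) is likewise contained in $S_\cup - S_\cap$, which has size at most $d$. Hence every characteristic polynomial that must be interpolated or factored has degree at most $d$, so the $(d+1)$-point sketches carry exactly enough information, as in Theorem~2. The hypothesis $q>m(N+1)$ guarantees that the encoding $\alpha\mapsto\alpha+m\,i_\alpha$ with $i_\alpha\in\{0,1,\dots,N\}$ is injective on $\Fq$ and invertible by division by $m$, so no two (element, owner) pairs collide and each recovered field element splits unambiguously into an element of $[m]$ and an owner index.

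For (C) I would reuse the reduction from the proof of Theorem~2 essentially verbatim. The owner augmentation changes only the payload carried along an edge, not which agents' original sketches reach which node, since the collection-growth dynamics $C,D\mapsto C\cup D$ are identical to single-message spreading. Thus I treat each of the $N$ original sets as an independent single-message rumor under the {\sc push}-{\sc pull} protocol, invoke Theorem~1.1 of \cite{giakkoupis} for each, and take a union bound over the $N$ rumors after $L=O(\phi^{-1}\log N)$ rounds. Once a node's collection equals $[N]$ it holds $\hat{\sigma}(S_\cup)$ and $\hat{\sigma}(S_\cap)$, and by part~(b) it extracts every element of $S_\cup-S_\cap$ together with its minimum owner, while the elements of $S_\cap$ are known to be held by all parties; this yields reconciliation with owner recovery. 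The communication bound is then a direct count: each message consists of the two sketches $\hat{\sigma}(S_{\cup C}),\hat{\sigma}(S_{\cap C})$ (plus the index $l_C$), i.e.\ at most two sketches of $d+1$ values of $\Fq$ per sub-round, each of size $O(d\log(mN))$ since $q$ may be taken to be $O(mN)$.

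The main obstacle is (A), the owner-id bookkeeping in the combination. The delicate elements are those lying in a sub-collection intersection on one side but leaving the new intersection, for instance $\alpha\in S_{\cap C}\setminus S_{\cap E}$: such an $\alpha$ carries the dummy owner-id $0$ in $\hat{\sigma}(S_{\cup C})$, yet because $\alpha\notin S_{\cap E}$ it must carry its true minimum owner among $E$ in $\tau(E)$, and one must check that this true minimum is exactly $l_E=l_C$. I would prove (A) by a case analysis, for each $\alpha\in S_{\cup E}-S_{\cap E}$, on which of $C$ and $D$ own $\alpha$ and on which side(s) $\alpha$ survives the division against $S_{\cap E}$, verifying in every case that the recovered index equals $\min\{i\in E: i\text{ owns }\alpha\}$. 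The crux is to confirm that every dummy $0$ produced by the divisions is re-encoded to the correct leader before the coordinatewise minimum of the two recovered owner indices is taken --- the $D$-side conversion $0\mapsto l_D$ described above, together with the corresponding handling of the $C$-side dummies at $l_E=l_C$ --- and that elements landing in $S_{\cap E}$ are reset to owner-id $0$. Checking that these re-encodings are mutually consistent, so that invariant~(iii) is re-established for $\tau(E)$ and the induction closes, is where essentially all of the work lies.
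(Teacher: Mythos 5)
Your proposal is correct and follows essentially the same route as the paper: the paper establishes this corollary by combining the gossip/union-bound argument of Theorem~2 (via Theorem~1.1 of \cite{giakkoupis}) with the owner-aware combination procedure in parts (a)--(c), which is precisely your ingredients (A)--(C), including the key re-encoding of dummy owner-id $0$ to $l_D$ (and the $C$-side handling at $l_E = l_C$) that you identify as the crux. Your explicit statement of the tuple invariant and the degree/injectivity checks in (B) merely makes precise what the paper leaves implicit.
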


\section{Conclusion}

We had found that while the characteristic polynomial approach to set
reconciliation has been known for some time, the issue of considering
generalizations to multi-party settings had never apparently been
suggested. Linear sketches based on Invertible Bloom Lookup Tables
allow fairly straightforward multi-party reconciliation protocols.
In this work, we show that using characteristic polynomials can as well,
albeit perhaps somewhat less naturally and with more computation requirements.

A possible future direction is to improve the computation time requirements. Currently,
the primitives that we use for finite field arithmetic are not especially attuned to our
needs. Is it possible to take advantage of properties of finite fields to
to enable more efficient manipulation of sketches?

It would also be interesting to investigate if this approach could be simplified further,
as characteristic polynomials provide one of the simplest and most natural
frameworks for reconciliation problems.

\end{document}